\begin{document}
\newcommand{\commentout}[1]{}

\newcommand{\nwc}{\newcommand}
\newcommand{\bz}{{\mathbf z}}
\newcommand{\sqk}{\sqrt{\ks}}
\newcommand{\sqkone}{\sqrt{|\ks_1|}}
\newcommand{\sqktwo}{\sqrt{|\ks_2|}}
\newcommand{\invsqkone}{|\ks_1|^{-1/2}}
\newcommand{\invsqktwo}{|\ks_2|^{-1/2}}
\newcommand{\partz}{\frac{\partial}{\partial z}}
\newcommand{\grady}{\nabla_{\by}}
\newcommand{\gradp}{\nabla_{\bp}}
\newcommand{\gradx}{\nabla_{\bx}}
\newcommand{\invf}{\cF^{-1}_2}
\newcommand{\myphi}{\Phi_{(\eta,\rho)}}
\newcommand{\minrg}{|\min{(\rho,\gamma^{-1})}|}
\newcommand{\al}{\alpha}
\newcommand{\xvec}{\vec{\mathbf x}}
\newcommand{\kvec}{{\vec{\mathbf k}}}
\newcommand{\lt}{\left}
\newcommand{\ksq}{\sqrt{\ks}}
\newcommand{\rt}{\right}
\newcommand{\ga}{\gamma}
\newcommand{\vas}{\varepsilon}
\newcommand{\lan}{\left\langle}
\newcommand{\ran}{\right\rangle}
\newcommand{\tvas}{{W_z^\vas}}
\newcommand{\psiep}{{W_z^\vas}}
\newcommand{\wep}{{W^\vas}}
\newcommand{\weptil}{{\tilde{W}^\vas}}
\newcommand{\wepz}{{W_z^\vas}}
\newcommand{\weps}{{W_s^\ep}}
\newcommand{\wepsp}{{W_s^{\ep'}}}
\newcommand{\wepzp}{{W_z^{\vas'}}}
\newcommand{\wepztil}{{\tilde{W}_z^\vas}}
\newcommand{\vvas}{{\tilde{\ml L}_z^\vas}}
\newcommand{\veptil}{{\tilde{\ml L}_z^\vas}}
\newcommand{\vep}{{{ V}_z^\vas}}
\newcommand{\cvc}{{{\ml L}^{\ep*}_z}}
\newcommand{\cvcp}{{{\ml L}^{\ep*'}_z}}
\newcommand{\cvp}{{{\ml L}^{\ep*'}_z}}
\newcommand{\cvtil}{{\tilde{\ml L}^{\ep*}_z}}
\newcommand{\cvtilp}{{\tilde{\ml L}^{\ep*'}_z}}
\newcommand{\vtil}{{\tilde{V}^\ep_z}}
\newcommand{\ktil}{\tilde{K}}
\newcommand{\n}{\nabla}
\newcommand{\tkappa}{\tilde\kappa}
\newcommand{\ks}{{\omega}}
\newcommand{\bx}{\mb x}
\newcommand{\br}{\mb r}
\nwc{\bH}{{\mb H}}
\newcommand{\bu}{\mathbf u}
\nwc{\bxp}{{{\mathbf x}}}
\nwc{\byp}{{{\mathbf y}}}
\newcommand{\bD}{\mathbf D}
\nwc{\bh}{\mathbf h}
\newcommand{\bB}{\mathbf B}
\newcommand{\bC}{\mathbf C}
\nwc{\cO}{\mathcal  O}
\newcommand{\bp}{\mathbf p}
\newcommand{\bq}{\mathbf q}
\newcommand{\by}{\mathbf y}
\nwc{\bP}{\mathbf P}
\nwc{\bs}{\mathbf s}
\nwc{\bX}{\mathbf X}
\newcommand{\pdg}{\bp\cdot\nabla}
\newcommand{\pdgx}{\bp\cdot\nabla_\bx}
\newcommand{\one}{1\hspace{-4.4pt}1}
\newcommand{\corr}{r_{\eta,\rho}}
\newcommand{\rinf}{r_{\eta,\infty}}
\newcommand{\rzero}{r_{0,\rho}}
\newcommand{\rzeroinf}{r_{0,\infty}}
\nwc{\om}{\omega}

\nwc{\nwt}{\newtheorem}
\nwc{\xp}{{x^{\perp}}}
\nwc{\yp}{{y^{\perp}}}
\nwt{remark}{Remark}
\nwt{definition}{Definition} 

\nwc{\ba}{{\mb a}}
\nwc{\bal}{\begin{align}}
\nwc{\be}{\begin{equation}}
\nwc{\ben}{\begin{equation*}}
\nwc{\bea}{\begin{eqnarray}}
\nwc{\beq}{\begin{eqnarray}}
\nwc{\bean}{\begin{eqnarray*}}
\nwc{\beqn}{\begin{eqnarray*}}
\nwc{\beqast}{\begin{eqnarray*}}

\nwc{\eal}{\end{align}}
\nwc{\ee}{\end{equation}}
\nwc{\een}{\end{equation*}}
\nwc{\eea}{\end{eqnarray}}
\nwc{\eeq}{\end{eqnarray}}
\nwc{\eean}{\end{eqnarray*}}
\nwc{\eeqn}{\end{eqnarray*}}
\nwc{\eeqast}{\end{eqnarray*}}

\nwc{\ep}{\varepsilon}
\nwc{\eps}{\varepsilon}
\nwc{\ept}{\epsilon}
\nwc{\vrho}{\varrho}
\nwc{\orho}{\bar\varrho}
\nwc{\ou}{\bar u}
\nwc{\vpsi}{\varpsi}
\nwc{\lamb}{\lambda}
\nwc{\Var}{{\rm Var}}

\nwt{cor}{Corollary}
\nwt{proposition}{Proposition}
\nwt{theorem}{Theorem}
\nwt{summary}{Summary}
\nwc{\nn}{\nonumber}
\nwc{\mf}{\mathbf}
\nwc{\mb}{\mathbf}
\nwc{\ml}{\mathcal}
\nwc{\bd}{{\mb d}}
\nwc{\IA}{\mathbb{A}} 
\nwc{\bi}{\mathbf i}
\nwc{\bo}{\mathbf o}
\nwc{\IB}{\mathbb{B}}
\nwc{\IC}{\mathbb{C}} 
\nwc{\ID}{\mathbb{D}} 
\nwc{\IM}{\mathbb{M}} 
\nwc{\IP}{\mathbb{P}} 
\nwc{\bI}{\mathbf{I}} 
\nwc{\IE}{\mathbb{E}} 
\nwc{\IF}{\mathbb{F}} 
\nwc{\IG}{\mathbb{G}} 
\nwc{\IN}{\mathbb{N}} 
\nwc{\IQ}{\mathbb{Q}} 
\nwc{\IR}{\mathbb{R}} 
\nwc{\IT}{\mathbb{T}} 
\nwc{\IZ}{\mathbb{Z}} 

\nwc{\cE}{{\ml E}}
\nwc{\cP}{{\ml P}}
\nwc{\cQ}{{\ml Q}}
\nwc{\cL}{{\ml L}}
\nwc{\cX}{{\ml X}}
\nwc{\cW}{{\ml W}}
\nwc{\cZ}{{\ml Z}}
\nwc{\cR}{{\ml R}}
\nwc{\cV}{{\ml L}}
\nwc{\cT}{{\ml T}}
\nwc{\crV}{{\ml L}_{(\delta,\rho)}}
\nwc{\cC}{{\ml C}}
\nwc{\cA}{{\ml A}}
\nwc{\cK}{{\ml K}}
\nwc{\cB}{{\ml B}}
\nwc{\cD}{{\ml D}}
\nwc{\cF}{{\ml F}}
\nwc{\cS}{{\ml S}}
\nwc{\cM}{{\ml M}}
\nwc{\cG}{{\ml G}}
\nwc{\cH}{{\ml H}}
\nwc{\bk}{{\mb k}}
\nwc{\bT}{{\mb T}}
\nwc{\cbz}{\overline{\cB}_z}
\nwc{\supp}{{\hbox{supp}(\theta)}}
\nwc{\fR}{\mathfrak{R}}
\nwc{\bY}{\mathbf Y}
\newcommand{\mbr}{\mb r}
\nwc{\pft}{\cF^{-1}_2}
\nwc{\bU}{{\mb U}}
\nwc{\bPhi}{{\mb \Phi}}
\title{ Compressive Imaging of Subwavelength Structures}
\author{Albert C.  Fannjiang}
\email{
fannjiang@math.ucdavis.edu}
  \thanks{The research is partially supported by
the NSF grant DMS - 0908535.}

      \address{
   Department of Mathematics,
    University of California, Davis, CA 95616-8633}
   
       \begin{abstract}
The problem of imaging extended targets  (sources or
scatterers) is formulated in the framework of compressed
sensing with emphasis  on subwavelength
resolution.

 The  proposed formulation of the
 problems of inverse source/scattering   
 is essentially exact and leads to the 
 random partial Fourier measurement matrix. 
 In the case of square-integrable targets, 
 the proposed  sampling scheme in the Littlewood-Paley wavelet basis  block-diagonalizes 
 the scattering matrix 
with each block 
 in the form of random partial Fourier matrix
 corresponding to each dyadic scale of the target. 
 
The  resolution issue is analyzed from two perspectives:
stability and the signal-to-noise ratio (SNR). 
The  subwavelength modes are shown to be  
 typically unstable. The  stability in the subwavelength
 modes requires  additional techniques such as near-field measurement or illumination. The number of the
 stable modes typically increases as the negative $d$-th  (the dimension of the target)
 power of the distance between the target and the
 sensors/source.
 The resolution limit is shown to be  inversely proportional to
the  SNR in the high SNR limit. 

  Numerical simulations are provided to validate  the theoretical predictions. 

       \end{abstract}
       
       \maketitle

\section{Introduction}

The  Compressed Sensing paradigm 
is supplying 
 a fresh perspective on the imaging problems, including 
 source inversion and inverse scattering (see the extensive
 literature in \cite{CSR}) and we aim to analyze these
 classical problems from the perspective of compressed sensing. 
 
 The main purposes of the  paper are  two folds: 
 (i) to formulate the problems 
 of  inverse source and scattering for  extended targets  
 in the framework of compressed sensing
 and analyze  it by the fundamental results 
 on the random Fourier measurements
\cite{CRT1, CRT2, CT, CT2, Rau,  RV}
and 
(ii) to use the compressed sensing solution thus obtained 
as a starting point to investigate the problem of  subwavelength resolution  in the presence of noise,  in particular,
to discuss the idea of extracting
 subwavelength information by
 near-field measurement and illumination,
 common in nano-optics \cite{NH}.

 The important feature of our  formulation of the
 imaging problems  
 is that it is essentially exact (without the paraxial
 approximation as used in \cite{CS-par}) and
 is for extended (periodic or non-periodic) targets.
 As such, our formulation provides
 an instructive  example for assessing the
 power of the compressed sensing techniques 
 when applied to the physical problems of imaging. 
 In the case of
  inverse scattering for square-integrable targets, 
  we use the Littlewood-Paley wavelet basis
  and propose a sampling scheme by which 
 the scattering matrix can be  block-diagonalized
with each block 
 in the form of random partial Fourier matrix
 corresponding to each dyadic scale present
 in the target structure. By this approach
 we can image the extended target scale-by-scale.

We analyze the  resolution issue from two perspectives.
First we define the stably recoverable (or stable for short) modes
in terms of the noise amplification factor.
 We show that  the  subwavelength modes are  
 typically unstable. To achieve stability in the subwavelength
reconstruction  we need additional techniques such as near-field measurement or illumination with which the number of the
 stable modes typically increases as the negative $d$-th  (the dimension of the target)
 power of the distance between the target and the
 sensors/source. Second, we 
 take into account the signal-to-noise ratio (SNR) in the resolution analysis and show that
 the resolution limit is inversely proportional to
the  SNR in the high SNR limit. 
 
The paper is organized as follows. 
We treat  the case of periodic targets using the Fourier
basis, first
for the inverse scattering in Section \ref{sec2} and then for the
inverse (Born)  scattering problem in Section \ref{sec3}. We present numerical
 results confirming the theoretical predictions in Section \ref{sec5}. In Section \ref{sec4}
 we analyze  the case of square-integrable targets
 using the Littlewood-Paley basis and point out the extension
 to three dimensions. 
 We conclude in Section \ref{sec6}.

  \section{Source inversion}
\label{sec2}
 
First  we consider imaging of an   extended source (the target)  in two dimensional
 $(z,x)$-plane. 
 We assume that the target is  located
 in the line $z=z_0$. 
 
 The  target
 is represented by the variable source  amplitude $\sigma(x)$ which is assumed
 to be 
 periodic with period $L$ and admits  
 the Fourier expansion 
 \beq
 \label{0.1}
 \sigma(x)=\sum_{k=-\infty}^{\infty} \hat \sigma_{k}
 e^{i2\pi kx/L}. 
 \eeq
 We consider the class of  band-limited functions  with $\hat\sigma_k=0$ except for 
\beq
\label{20}
 k\in [-(m-1)/2,  (m-1)/2]
\eeq
in (\ref{0.1}) where $m$ is an odd integer so that
there are at most $m$ relevant Fourier modes. 

The wave propagation in the free space  is governed  by the Helmholtz equation 
\[
\Delta u+\om^2 u=0. 
\]
where  
 $\om=2\pi/\lambda$ is the frequency assuming the wave speed is unity.  In two dimensions,
the Green function  $G(\br)$ is 
 \[
 G(\br)={i\over 4} H^{(1)}_0(\om  |\br|),\quad \br=(z,x)
 \]
 where $H^{(1)}_0$ is the zeroth order Hankel function
 of the first kind.  $G$ can be expressed by
  the Sommerfeld integral representation
  \beq
  \label{somm}
 G(\br)={i\over 4\pi}
  \int e^{i\om(|z|\beta(\alpha)+x\alpha)} {d\alpha\over \beta(\alpha)},  \eeq
  where 
  \beq
  \label{16}
  \beta(\alpha)=\lt\{\begin{array}{ll}
  \sqrt{1-\alpha^2}, & |\alpha|<1\\
  i\sqrt{\alpha^2-1},& |\alpha|> 1
  \end{array}
  \rt.
  \eeq
  \cite{BW}. The integrand in (\ref{somm}) with  
  real-valued $\beta$ (i.e. $|\alpha|<1$)
corresponds to the homogeneous wave
and that with imaginary-valued $\beta$ (i.e. $|\alpha|>1$) corresponds
to the evanescent (inhomogeneous) wave which has
an exponential-decay factor $e^{-\om |z| \sqrt{\alpha-1}}$. 

The signal arriving at the sensor located at $(0, x)$ is given by
\beq
\label{sense}
\int G(z_0, x-x') \sigma (x') dx'&=&
{i\over 2\om}\sum_{k} {\hat\sigma_k\over \beta_k}  e^{i\om z_0\beta_k } e^{i\om \alpha_k x}
\eeq
by (\ref{somm}) where 
\beq
\label{17}
\alpha_k= {k\lambda\over L}, & & \beta_k = \beta(\alpha_k).
  \eeq
By (\ref{17}), the subwavelength structure of the target  is encoded in
$\hat \sigma_k$  with $|k|\lambda >L$ which are  conveyed  by 
  the evanescent waves. 

Let $(0, x_j), x_j=\xi_jL, j=1,...,n$ be the coordinates
of the $n$ sensors in the line $z=0$. 
To set the problem in the framework of compressed sensing 
 we 
 define the signals received by the $n$ sensors as 
 the measurement vector $Y$ and  set the target vector
 $X=(X_k)\in\IC^m$ as 
 \beq
 \label{13}
 X_k={i \sqrt{n} e^{i\om z_0\beta_k }\over 2\om \beta_k}   
 {\hat\sigma_k}.
 \eeq
  To avoid a vanishing denominator in (\ref{13}), we must set $L/\lambda\not \in \IN$ so that $\beta_k\neq 0$. 
This leads to the form
\beq
\label{u1}
Y=\bPhi X
\eeq
where  
the sensing matrix  $\bPhi =[\Phi_{jk}] \in \IC^{n\times m}$ 
has the entries
\beq
\label{0.6}
\Phi_{jk}&=&{1\over \sqrt{n}} e^{i\om\alpha_k x_j}={1\over \sqrt{n}} e^{i2\pi k \xi_j},\quad
j=1,...,n,\quad k=1,...,m.
\eeq
\commentout{
 \beq
 \label{1.10}
 A_{jk}&=&\lt\{\begin{array}{ll}
 {e^{i\om z_0 \beta_k}\over \sqrt{n}} e^{i\om \alpha_k \xi_j} ,
 &\mbox{for} \quad \alpha_k\leq 1\\
 {-i\over \sqrt{n}} e^{i\om \alpha_k \xi_j} ,
 &\mbox{for}\quad  \alpha_k > 1
 \end{array}\rt.
 \eeq
 for $
\forall j=1,...,n$.
}
Hence the imaging procedure is split into two stages:
the first is to ``invert''  $\Phi$ and solve for $X$, and
the second is to deduce $\sigma$ from $X$.
The sensing matrix (\ref{0.6}) is designed  to mimic 
the random partial Fourier matrix which has
the restricted isometry property and is amenable
to the compressed sensing techniques (see below). 
Note that the formulation is essentially exact and 
no far-field approximation is  made. 

 The main thrust of compressive sensing is that under
suitable conditions the  inversion
can  be  achieved as the $\ell^1$-minimization
\beqn
\mbox{min} \|X\|_1,\quad\mbox{subject to}\quad Y=\bPhi X
\eeqn
which also goes by the name of {\em Basis Pursuit} (BP)
and can be solved by linear programming. 
BP was first introduced empirically in seismology by Claerbout and Muir and later studied mathematically by Donoho and
others  \cite{BDE, CDS, DH}.

\commentout{
This can be achieved by proper normalization of the vectors noting  that the amplitude in (\ref{sense}) 
\beq
\label{kth}
-{i\over 2\om \beta_k}  e^{i\om z_0\beta_k } 
\eeq
is independent of the row-index  $j$. This amounts to
considering the target vector  $X=(X_k)= \bT \hat\sigma $
where $\hat\sigma=(\hat\sigma_k) \in \IC^m$ and
$\bT$ is the diagonal matrix with $k-$th entry given by
$\sqrt{m}$ times the quantity  (\ref{kth}). 
}

A fundamental notion in compressed sensing under which
BP yields the unique exact solution is the restrictive isometry property due to Cand\`es and Tao \cite{CT}.
Precisely, let  the sparsity $s$  of the target vector be the
number of nonzero components of $X$ and define the restricted isometry constant $\delta_s$
to be the smallest positive number such that the inequality
\[
(1-\delta_s) \|Z\|_2^2\leq \|\bPhi Z\|_2^2\leq (1+\delta_s)
\|Z\|_2^2
\]
holds for all $Z\in \IC^m$ of sparsity at most $ s$. 
\commentout{
\[
(1-\delta_s) \|Z\|_2^2\leq \|\bPhi_S Z\|_2^2\leq (1+\delta_s)
\|Z\|_2^2
\]
holds for all $Z\in \IC^{|S|}$ and all subsets $S\subset \{1,...,m\}$ of size
$|S|\leq s$ where $\bPhi_S$ denotes the $p\times  |S|$ matrix
consisting of the columns of $\bPhi$ indexed by $S$. 
}

For the target vector $X$ let $X_s$ denote the best $s$-sparse
approximation of $X$ in the sense of $L^1$-norm, i.e.
\[
X^s=\hbox{argmin}\,\, \|Z-X\|_1,\quad\hbox{s.t.}
\quad \|Z\|_0\leq s
\]
where $\|Z\|_0$ denotes the number of nonzero components, called the sparsity,  of $Z$. 
Clearly, $X^s$ consists of the $s$ largest components
of $X$.

Now we state the fundamental   result 
of the RIP approach \cite{Can} which is an improvement
of the results of \cite{CRT1, CT}. 
\begin{theorem} \label{thm1} \cite{Can}
Suppose the  restricted isometry constant of $\bPhi$
satisfies the inequality 
\beq
\label{ric}
\delta_{2s}<\sqrt{2}-1
\eeq
 Then the solution $X_*$ by BP satisfies
 \beq
 \label{100}
 \|X_*-X\|_1&\leq & C_0\|X-X^s\|_1\\
 \|X_*-X\|_2&\leq & C_0s^{-1/2}\|X-X^s\|_1\label{101}
 \eeq
 for some constant $C_0$. In particular,
 if $X$ is $s$-sparse, then the recovery is exact.
\end{theorem}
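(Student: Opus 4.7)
The plan is to follow the standard RIP--based proof of basis pursuit recovery due to Cand\`es, adapted to the complex setting of $\bPhi$. Write $X_*=X+h$; since $Y=\bPhi X=\bPhi X_*$, we have $\bPhi h=0$, and since $X_*$ is a BP minimizer, $\|X+h\|_1\le \|X\|_1$. The goal is to show that $h$ must be small in $\ell^2$, hence in $\ell^1$ after a Stechkin--type inequality.

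First I would fix an index set $T_0$ of cardinality $s$ corresponding to the largest entries of $X$ (so $X_{T_0}=X^s$), and decompose $T_0^c$ into disjoint blocks $T_1,T_2,\ldots$ of size $s$, with $T_1$ indexing the $s$ largest entries of $h_{T_0^c}$, $T_2$ the next $s$ largest, and so on. The BP inequality, combined with the triangle inequality on $T_0$ and $T_0^c$, gives the \emph{cone} bound
\ben
\|h_{T_0^c}\|_1 \le \|h_{T_0}\|_1 + 2\|X-X^s\|_1,
\een
which shows that $h$ is essentially concentrated on $T_0$. The standard monotone--rearrangement estimate on the dyadic blocks then yields
\ben
\sum_{j\ge 2}\|h_{T_j}\|_2 \;\le\; s^{-1/2}\|h_{T_0^c}\|_1.
\een

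Next I would exploit $\bPhi h=0$ in the form $\bPhi h_{T_0\cup T_1} = -\sum_{j\ge 2}\bPhi h_{T_j}$. Applying $\delta_{2s}$ to the left--hand side gives a lower bound $(1-\delta_{2s})\|h_{T_0\cup T_1}\|_2^2$, while the right--hand side is controlled by the RIP polarization inequality $|\langle \bPhi u,\bPhi v\rangle|\le \delta_{2s}\|u\|_2\|v\|_2$ for disjointly supported $s$--sparse $u,v$. Since $\|h_{T_0}\|_2\le \|h_{T_0\cup T_1}\|_2$, this produces an inequality of the shape
\ben
\|h_{T_0\cup T_1}\|_2 \;\le\; \frac{\sqrt{2}\,\delta_{2s}}{1-\delta_{2s}}\|h_{T_0\cup T_1}\|_2 + C\,s^{-1/2}\|X-X^s\|_1.
\een
Here the crucial algebraic point is that the hypothesis $\delta_{2s}<\sqrt{2}-1$ is exactly what makes the contraction factor $\sqrt{2}\,\delta_{2s}/(1-\delta_{2s})$ strictly less than $1$, so the term on the right can be absorbed into the left--hand side.

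Having bounded $\|h_{T_0\cup T_1}\|_2$, I would recover (\ref{101}) by noting $\|h\|_2\le \|h_{T_0\cup T_1}\|_2+\sum_{j\ge 2}\|h_{T_j}\|_2$ and invoking the block--rearrangement inequality together with the cone bound. The $\ell^1$ estimate (\ref{100}) then follows by combining the cone bound with $\|h_{T_0}\|_1\le \sqrt{s}\,\|h_{T_0}\|_2$. The delicate point, and the one I would double--check most carefully, is the sharp tracking of constants in the polarization step: the factor $\sqrt{2}$ (rather than $2$) arises from applying the RIP inner--product bound to the two pieces $h_{T_0}$ and $h_{T_1}$ separately and using $\|h_{T_0}\|_2^2+\|h_{T_1}\|_2^2=\|h_{T_0\cup T_1}\|_2^2$ with Cauchy--Schwarz. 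This is what yields the explicit threshold $\sqrt{2}-1$ in (\ref{ric}) and is the only place where the particular numerical value of the constant is used.
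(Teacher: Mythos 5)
The paper states this theorem without proof, citing it directly from \cite{Can}, so there is no internal argument to compare against; your sketch reproduces the standard Cand\`es argument from that reference (cone constraint, dyadic block decomposition of $T_0^c$, RIP polarization on $h_{T_0\cup T_1}$ versus the tail blocks, and absorption of the contraction factor $\sqrt{2}\,\delta_{2s}/(1-\delta_{2s})<1$), and it is correct, including the identification of where the threshold $\sqrt{2}-1$ enters. Nothing further is needed.
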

\begin{remark}
Greedy algorithms have significantly lower  computational
complexity than linear programming and have 
provable performance under various conditions.
For example
under the condition  $\delta_{3s}<0.06$
 the Subspace Pursuit (SP) algorithm is guaranteed to exactly recover $X$ via a finite number of iterations \cite{DM}.
 See \cite{NT} for a  closely related algorithm (CoSaMP).

\end{remark}

\begin{theorem}\cite{Rau}\label{thm2}
Let $\xi_j, j=1,...,n$ be independent, uniform random variables in $[0,1]$. 
Suppose \beq
\label{77}
{n\over \ln{n}}\geq C \delta^{-2} r \ln^2{r} \ln{m} \ln{1\over \eta},\quad
\eta\in (0,1)
\eeq
for a given sparsity $r$ where $C$ is an absolute constant. 
Then the restricted isometry constant 
of  the matrix for random Fourier measurements  satisfies 
\[
\delta_r\leq \delta
\]
 with
probability at least $1-\eta$.
\end{theorem}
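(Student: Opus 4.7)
The plan is to reduce the restricted isometry property to an operator-norm deviation of the empirical covariance of the rows of $\bPhi$, bound its expected supremum over $r$-sparse supports by symmetrization and chaining, and then apply a concentration inequality to obtain the probabilistic statement (\ref{77}). This is the standard route for partial Fourier RIP, sharpened by Rauhut to get the stated logarithmic factors.

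First I would observe that $\delta_r \leq \delta$ is equivalent to
\[
E_r := \sup_{T \subset \{1,\ldots,m\},\, |T|=r} \bigl\|\bPhi_T^* \bPhi_T - I_T\bigr\|_{\mathrm{op}} \leq \delta,
\]
where $\bPhi_T$ is the submatrix of columns indexed by $T$. Writing $\phi_j^T = (e^{i2\pi k \xi_j})_{k\in T}$, orthogonality of Fourier characters on $[0,1]$ gives $\IE[\phi_j^T (\phi_j^T)^*] = I_T$, so
\[
\bPhi_T^* \bPhi_T - I_T = \frac{1}{n}\sum_{j=1}^n \bigl(\phi_j^T(\phi_j^T)^* - I_T\bigr)
\]
is a centered empirical sum of uniformly bounded rank-one operators.

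Second, I would bound $\IE E_r$ via symmetrization followed by Rudelson's non-commutative Khintchine inequality. Conditional on $(\xi_j)$, the quantity $\IE_\varepsilon\|\sum_j \varepsilon_j \phi_j^T(\phi_j^T)^*\|_{\mathrm{op}}$ is controlled by $\sqrt{\ln r}$ times $\max_j \|\phi_j^T\|_\infty \cdot \|\sum_j \phi_j^T(\phi_j^T)^*\|_{\mathrm{op}}^{1/2}$, giving a self-referential inequality for $\IE E_r$. The supremum over $T$ is then handled by Dudley's entropy integral, or by generic chaining for sharp constants: one parametrizes the process by $r$-sparse unit vectors in $\IC^m$ and bounds the covering numbers with two complementary volumetric estimates — one producing the $\ln r$ factor and the other the $\ln m$ factor. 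Iterating and absorbing the $\sqrt{\IE E_r}$ term yields a mean bound of order $\sqrt{(r/n)\ln^2 r \ln m \cdot \ln n}$. Finally, Talagrand's concentration inequality for suprema of empirical processes, applied to the matrix-valued summands $\phi_j^T(\phi_j^T)^* - I_T$, upgrades the mean bound to a tail bound contributing the $\ln(1/\eta)$ factor. Choosing $n$ to satisfy (\ref{77}) forces both pieces to be bounded by $\delta$.

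The main obstacle is sharpness in the chaining step: a naive union bound over the $\binom{m}{r}$ supports combined with scalar Bernstein-type inequalities yields a $\ln^4 m$ factor (the original Cand\`es-Tao estimate in \cite{CT}). Extracting the improved $\ln^2 r \ln m$ scaling requires executing the covering-number analysis in the operator-norm pseudometric and exploiting cancellation between supports through multi-scale chaining rather than a crude union, and this is precisely the technical refinement furnished in \cite{Rau}.
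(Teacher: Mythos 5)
The paper does not prove this statement---Theorem \ref{thm2} is imported verbatim from the cited reference \cite{Rau}, so there is no internal proof to compare against. Your outline (reduction to the operator-norm deviation $\sup_{|T|\leq r}\|\bPhi_T^*\bPhi_T-I_T\|$, symmetrization plus Rudelson's non-commutative Khintchine inequality, chaining with two complementary covering-number estimates supplying the $\ln^2 r$ and $\ln m$ factors, absorption of the resulting self-referential mean bound, and a concentration step for the $\ln(1/\eta)$ dependence) is an accurate sketch of the strategy actually used in \cite{Rau} and in Rudelson--Vershynin \cite{RV}, modulo the small slip that Rudelson's lemma is stated in terms of $\max_j\|\phi_j^T\|_2=\sqrt{r}$ rather than $\max_j\|\phi_j^T\|_\infty$.
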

See \cite{CRT1, CT2,  RV}  for the case when $\xi_l$ belong to 
the discrete  subset of  $[0,1] $ of equal spacing $1/\sqrt{m}$. 

\commentout{
To realize the restrictive isometry property for our matrix $\bPhi$, we follow \cite{CRT1} and  let $I_1, I_2,..., I_m$
be independent Bernoulli random variables taking the value
1 with probability $\bar n/m, \bar n\ll m.$  Let 
 \beq
 \label{10}
 \xi_j=\eta_j L/m
 \eeq
 where  $\eta_j$ belongs to
the random set
\beq
\Omega= \{ q\in \{0,...,m-1\}| I_q=1\}.
\eeq
We have  $\IE|\Omega|=\bar n$ and indeed by 
Bernstein's inequality $ n=|\Omega|$ is close to $\bar n$
with high probability. 
\commentout{ \cite{AS} that
\beq
\label{bern}
\IP\lt(|\Omega|\geq (1-\sigma) n \rt)\geq 1-e^{-n\sigma^2/2}.
\eeq
}

Now we state two fundamental results essential for our problem. 
\begin{theorem} \cite{CT}\label{thm1}
Suppose the true target vector $X$ has
the sparsity at most $s$. 
Suppose the  restricted isometry constant of $\bPhi$
satisfies the inequality 
\beq
\label{ric}
\delta_{3s}+3\delta_{4s}< 2. 
\eeq
 Then $X$ is the unique solution of BP.
\end{theorem}
\begin{remark}
Greedy algorithms have significantly lower  computational
complexity than linear programming and have 
provable performance under various conditions.
For example
under the condition  $\delta_{3s}<0.06$
 the Subspace Pursuit (SP) algorithm is guaranteed to exactly recover $X$ via a finite number of iterations \cite{DM}.
 
\end{remark}

\begin{theorem}\cite{RV}\label{thm2}
For any $\tau>1$ and any $m, s>2$, a random set $\Omega$ 
of average cardinality 
\beq
\bar n=(C \tau s \log{m}) \log{(C\tau s\log{m})} \log^2 s\label{12}
 \eeq
 gives rise to a sensing matrix $\bPhi$ which 
satisfies the restricted isometry condition (\ref{ric}) 
with probability at least $1-5 e^{-c\tau}$. Here $c$ and $C$
are some absolute constants. 
\end{theorem}
}

\commentout{
\begin{theorem}\cite{CT2}\label{thm2}
For any $\tau>1$ and for all $m$ sufficiently large
(depending on $\tau$), a random set $\Omega$ 
of cardinality 
\beq
\label{10}
|\Omega|\geq C\tau s \log^6 m
\eeq
satisfies the restricted isometry condition (\ref{ric}) 
with probability at least $1-m^{-c\tau }$. Here $c$ and $C$
are some absolute constants. 
\end{theorem}
}
\commentout{
\begin{remark}
By the estimate (\ref{bern}), (\ref{10}) holds
with probability at least $1-e^{-n\sigma^2/2}$ for
\beq
\label{11}
n\geq C\tau s \log^6 m/(1-\sigma)
\eeq
 and so does 
the restrictive isometry condition (\ref{ric}) 
with probability at least 
$(1-e^{-n\sigma^2/2})(1-m^{-c\tau})$. 
\end{remark}
\begin{remark}
Theorem \ref{thm2} in some aspect  has been improved by 
. More precisely, condition (\ref{11}) is replaced  by 
 \beq
 n=C \tau s \log{m} \log{(C\tau s\log{m})} \log^2 s\label{12}
 \eeq
 at the expense of a 
 smaller lower  bound $
 1-5e^{-c\tau}$ for the success probability. 
 \end{remark}
 }
\commentout{
\begin{remark}
Theorem \ref{thm2} in some aspect  has been improved by 
 Rudelson and
 Vershynin \cite{RV}. More precisely, condition (\ref{11}) is replaced  by 
 \beq
 n=C t s \log{m} \log{(Cts\log{m})} \log^2 s\label{12}
 \eeq
 at the expense of an enlarged aperture
 \[
 \xi_j=t\eta_j L/m, \quad t>1, 
 \]
 instead of (\ref{10}), which also controls
 the lower bound for the success probability $
 1-5e^{-ct}$. In other words, to achieve  the reduction of
 the power of $\log{m}$ from 6 in (\ref{11}) to
 $1^+$ in (\ref{12}) with a similar success probability
 bound, the sensor aperture has to be substantially enlarged. 
\end{remark}
}
As a consequence of Theorem \ref{thm1} and \ref{thm2},
the target vector $X$ can be determined by BP and
then the source amplitude $\sigma(x)$ can be reconstructed exactly
from (\ref{13}),
including all the subwavelength structures. In other words,  in the absence
of noise there is essentially
no limitation to the resolving power of the compressed
sensing technique, subwavelength or not, as long
as sufficient number of measurements are made.  

When noise is present, however,  the performance of the above  approach may be severely   limited, especially in the recoverability of
 subwavelength information. Consider  the standard model of additive noise
\beq
\label{31}
Y^\ep=\bPhi X+E
\eeq
where $\|E\|_2=\ep>0$ is the size of the noise and 
the associated relaxation scheme
\beq
\label{32}
\mbox{min} \|X\|_1,\quad \mbox{subject  to}\quad \|Y^\ep-\bPhi X\|_2\leq \epsilon.
\eeq
The next result  is a restatement of the result of \cite{Can} 
after applying  Theorem  \ref{thm2} with $r=2s, \delta<\sqrt{2}-1$. 
\begin{theorem}\label{thm8}
 Let $X^\ep$ be the solution to (\ref{32}).
Then under the assumptions of Theorem (\ref{thm2})  
we have 
 \beq
 \|X^\ep-X\|_2\leq C_0s^{-1/2} \|X-X^s\|_1+ C_1\ep
 \eeq
with
probability at least $1-\eta$
where  $C_0$ and $C_1$ are
constants.   \end{theorem}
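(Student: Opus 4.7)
The plan is to combine two ingredients: the probabilistic restricted isometry estimate of Theorem \ref{thm2} for the random partial Fourier matrix $\bPhi$ (with sparsity parameter $r=2s$ and isometry constant $\delta<\sqrt{2}-1$), and a deterministic stability result for $\ell^1$-minimization under the RIP due to Cand\`es. The output of the first ingredient is the event $\{\delta_{2s}(\bPhi) <\sqrt{2}-1\}$, which by Theorem \ref{thm2} has probability at least $1-\eta$ provided $n$ satisfies (\ref{77}). Conditioning on this event, the second ingredient yields the error bound deterministically, so the conclusion of Theorem \ref{thm8} holds with the same probability.

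For the deterministic part, I would first note that both the true target $X$ and the minimizer $X^\ep$ of (\ref{32}) are feasible, since $\|Y^\ep-\bPhi X\|_2=\|E\|_2\leq \ep$. Writing $h=X^\ep-X$, feasibility yields the \emph{tube constraint} $\|\bPhi h\|_2\leq 2\ep$. The $\ell^1$-optimality $\|X^\ep\|_1\leq \|X\|_1$ yields a \emph{cone constraint}: if $T_0$ indexes the $s$ largest entries of $X$, then $\|h_{T_0^c}\|_1\leq \|h_{T_0}\|_1+2\|X-X^s\|_1$.

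The main analytic step is then a standard RIP argument: partition $T_0^c$ into successive index sets $T_1,T_2,\dots$ of size $s$, ordered by decreasing magnitude of $|h|$, and use the $\ell^1/\ell^2$ inequality on coordinate blocks together with the RIP bound (with constant $\delta_{2s}$) on $\bPhi$ restricted to $T_0\cup T_j$. This gives
\be
\|h_{(T_0\cup T_1)^c}\|_2\leq \sum_{j\geq 2}\|h_{T_j}\|_2\leq s^{-1/2}\|h_{T_0^c}\|_1,
\ee
and combining with the tube constraint and the cone constraint produces
\be
\|h\|_2\leq C_0 s^{-1/2}\|X-X^s\|_1+C_1\ep,
\ee
with absolute constants $C_0,C_1$ depending only on $\delta_{2s}$, and finite precisely because $\delta_{2s}<\sqrt{2}-1$.

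The main obstacle is essentially bookkeeping rather than substance: the threshold $\sqrt{2}-1$ is sharp for the cone-decomposition argument, and one must track the constants carefully through the chain of inequalities so that the coefficient of $\|h_{T_0}\|_2$ on the right-hand side remains strictly smaller than one, which is exactly what the hypothesis $\delta_{2s}<\sqrt{2}-1$ guarantees. Since this deterministic implication is precisely the content of the cited theorem in \cite{Can}, and the probabilistic hypothesis of Theorem \ref{thm2} is invoked exactly once (with $r=2s$) to secure the RIP event, no new ideas beyond the two cited results are required.
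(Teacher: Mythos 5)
Your proposal is correct and follows the same route as the paper: the paper proves Theorem \ref{thm8} precisely by invoking Theorem \ref{thm2} with $r=2s$, $\delta<\sqrt{2}-1$ to secure the RIP event with probability at least $1-\eta$, and then citing the deterministic stability result of \cite{Can} for the error bound. Your additional sketch of the tube/cone-constraint argument merely unpacks the content of the cited theorem, so there is no substantive difference.
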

See also  \cite{DET, Tro3} for
stability result under the condition of incoherence.

\commentout{
\begin{theorem}\cite{CRT2}\label{thm3}
 Let $X^\ep$ be the solution to (\ref{32}).
 If  $\|X\|_0\leq s$ and (\ref{ric}) is true, then
 \[
 \|X^\ep-X\|_2\leq C_s \ep
 \]
 where $C_s$ depends only on  and behaves reasonably with $\delta_{4s}$. 
 \end{theorem}
}

 Inverting the relationship (\ref{13}) with small error in the
 target vector $X$ 
produces a mildly amplified error for those $\hat \sigma_k$ such that
\beq
\label{52}
|e^{i\om z_0\beta_k}|\geq e^{-2\pi}
\eeq
 but significantly amplified error otherwise.  
Here the transition is not clear-cut, however. The choice of the noise amplification threshold   $e^{2\pi}$
 as the stability criterion is convenient but arbitrary;
 any constant less than one will serve our purpose.
 
  The {\em stable}  (or {\em stably recoverable})  modes are  
those corresponding to  $|\alpha_k|\leq 1$ as well as  $|\alpha_k|>1$ such that
\beq
\label{51}
\om |\beta_k| z_0\leq 2\pi 
\eeq
or equivalently
\beq
\label{resolve}
{k}\leq L \sqrt{\lambda^{-2}+  z_0^{-2}}
\eeq
Hence reducing the distance $z_0$ between the sensor array and
the target can  effectively  enlarge the number of stable modes
and more of the subwavelength modes become stably recoverable
as $z_0$ decreases below wavelength. 
This is the idea behind
 the near-field imaging systems such as  the scanning 
microscopy.

The above reconstruction essentially consists of two
stages: Stage 1 involves  the compressed sensing techniques
which  is  always stable (Theorem \ref{thm8}) and Stage 2 is a simple inversion of the  diagonal matrix
\beq
\label{13'}
\hbox{diag}\lt({i \sqrt{n} e^{i\om z_0\beta_k }\over 2\om \beta_k}\rt), 
\eeq
cf. (\ref{13}), which  is  stable if and only if  (\ref{51}) holds
for all $k$ such that $\hat\sigma_k\neq 0$.
When (\ref{51}) is indeed violated, brute force inversion
of the matrix (\ref{13'}) would lead to an enormous error
and hence  a regularization is called for. In such a case, we can
use the well-known Tikhonov regularization to invert (\ref{13'}).  

\commentout{
Condition (\ref{51}) leads to the inequality
\beqn
\label{51-2}
\alpha_k={k\lambda\over L}\leq \sqrt{1+\lt({\lambda\over z_0}\rt)^2}
\eeqn
}

The analysis from (\ref{52}) to (\ref{resolve}) focuses on
the stability issue based on the  noise amplification factor. 
On the other hand, 
it is well accepted that the resolution of any imaging system
should be a function of  the signal-to-noise-ratio 
(see \cite{RA-JOSA} and references therein). 
Let us now analyze the resolution  from this perspective.

Let $X^\ep=(X^\ep_k)$ be the solution of
the convex relaxation scheme (\ref{32}) and
let 
\beq
\label{13''}
\hat\sigma^\ep_k={2\om \beta_k\over i\sqrt{n}} e^{-i\om z_0\beta_k} X^\ep_k.
\eeq
For simplicity, let us assume that the sparsity of $X$ is at most $s$ and therefore by Theorem \ref{thm8} we have
\beq \label{16'}
 \|X^\ep-X\|_2\leq  C_1\ep
 \eeq
 where the constant $C_1$ is well-behaved as demonstrated
 in the numerical results of Section  \ref{sec5}. 

We say that the Fourier mode $\hat\sigma_k$ is {\em resolved }  
if the following
inequality holds:
\beqn
\lt|{\hat\sigma^\ep_k\over \hat \sigma_k}-1\rt|
\leq {1\over K} 
\eeqn
where the constant $K$ is sufficiently larger than unity.
Using (\ref{16'}) and (\ref{13''}) we obtain 
\beq
\lt|{\hat\sigma^\ep_k\over \hat \sigma_k}-1\rt|
&\leq& {2\om |\beta_k|\over \sqrt{n}} \lt| e^{-i\om z_0\beta_k}\rt|{\lt|X^\ep_k-X_k\rt|\over |\hat\sigma_k|}\nn\\
&\leq&  {2C_1\om |\beta_k|}{\ep\over \lt| e^{i\om z_0\beta_k}\rt| |\hat\sigma_k|\sqrt{n}}.\label{255}
\eeq
The expression
\[
\hbox{SNR}_k={\lt| e^{i\om z_0\beta_k}\rt| |\hat\sigma_k| \sqrt{n}/\ep}
\]  has the meaning of
the {\em signal-to-noise ratio for the $k$-th Fourier mode}
and we shall call it  as such. 

\commentout{
Consider the case of sufficiently large
$\hbox{SNR}_k$
\beq
\label{snr}
  \hbox{SNR}_k\gg 1,\quad \hbox{SNR}_k\gg {\sqrt{n}\over 2\om |\beta_k|C_1}
 \eeq
 for all nonvanishing modess $\hat\sigma_k$ in which case we shall just write $\hbox{SNR}$. 
 } Assuming  that the right hand side of (\ref{255})
is at most $1/K$ we obtain from (\ref{255}) the
inequality
\beq
\label{24'}
\lt|\beta_k\rt|\leq {\hbox{SNR}_k\over 2C_1 K\om}. 
\eeq
and consequently 
\beq
\label{266}
{L\over k} \geq \lt({1\over \lambda^2}+\lt({{\hbox{SNR}_k}\over 4\pi C_1 K}\rt)^2\rt)^{-1/2}. 
\eeq
The resolution limit is then obtained by minimizing the left
hand side of (\ref{266}). For simplicity, considering
the case  
that $\hbox{SNR}_k=\hbox{SNR}$ is independent of $k$ for all
nonzero modes $\hat\sigma_k$ we obtain
the resolution limit 
\beq
\label{resolve'}
\lt({1\over \lambda^2}+\lt({{\hbox{SNR}}\over 4\pi C_1 K}\rt)^2\rt)^{-1/2}
\eeq
which essentially says that
the resolution  is inversely proportional to $\hbox{SNR}$
in the high SNR limit.  

Next   let us discuss a different mechanism of  
 superresolution  available in the context of
inverse scattering.

\section{Inverse scattering}
\label{sec3}

Consider  a periodic scatterer
with scattering amplitude $\sigma$ admitting the representation
(\ref{0.1}). 
For simplicity, we use the Born scattering model \cite{BW}  under which  the scattered field at $z=0$  is given by
 \beq
 \label{15}
 u^{\rm s}(0, x)=\int G(z_0, x-x') \sigma(x') u^{\rm i}(z_0, x') dx'
 \eeq
 where $u^{\rm i}(z,x)$ is the incident field. Now with
 the normally incident plane wave
 $u^{\rm i}(z,x)=e^{i\om z}$, eq. (\ref{15}) is essentially
 reduced to (\ref{sense}). 
 
Consider the obliquely  incident  plane wave  $u^{\rm i}(z,x)=e^{i\om(\alpha x+\beta |z-z_1|)}$ where $\beta$ is related to $\alpha$ as in (\ref{17}) and $z_1$, assumed larger than $z_0$. 
Set 
\[
\alpha={q\lambda\over L},\quad q\in \IR.
\]
In the standard setting the illumination field is
a homogeneous wave with  $|\alpha|< 1$. 
The evanescent illumination $|\alpha|>1$ will also be considered here. In such case, $z_1$ is the $z$-coordinate of the illumination source. 

The same calculation as before now leads to
\beqn
u^{\rm s}(0,x)={i\over 2\om}\sum_{k} {\hat\sigma_k\over \beta_k}  e^{i\om z_0\beta_k} e^{i\om (z_1-z_0)\beta} e^{i\om \alpha_kx} 
\eeqn
where  instead of (\ref{17}) we have
\beqn
\alpha_k= {(k+q) \lambda\over L}, & & \beta_k =
\lt\{\begin{array}{ll}
  \sqrt{1-\alpha_k^2}, & |\alpha_k|<1\\
  i\sqrt{\alpha_k^2-1},& |\alpha_k|> 1
  \end{array}
  \rt.. 
  \eeqn

Define  the target vector
 $X^{(q)}=(X^{(q)}_k)\in\IC^m$ as 
 \beq
 \label{14}
 X^{(q)}_k={i  \sqrt{n} e^{i\om z_0\beta_k} e^{i\om (z_1-z_0)\beta} \over 2\om \beta_k}   
 {\hat\sigma_k}
 \eeq
 and proceed as before. Note that $e^{i\om (z_1-z_0)\beta} $
 is a constant factor determined solely by the illumination source
 (distance and angle). 
  To avoid a vanishing denominator 
 we require
 \beq
 \label{21-4}
 {(k+q) \lambda\over L}\neq 1,\quad\forall k.
 \eeq
 Theorem \ref{thm1} and \ref{thm2}
are applicable  to the shifted Fourier matrix $\bPhi^{(q)}=[ \Phi^{(q)}_{jk}]$ with 
 \[
\Phi^{(q)}_{jk}= {1\over \sqrt{n}} e^{i2\pi  (k+q) \xi_j}
 \]
 where  $\xi_j$ are independent, uniform random variables in $[0,1]$.
 
\commentout{ 
In the presence of small noise, all the modes $\hat\sigma_k$  corresponding to $|\alpha_k|<1$ can still be recovered stably. This means all the modes in the set
\beq
\label{21}
\{k: |k+q|<L/\lambda\}
\eeq
can be recovered stably. 

If we now conduct compressive imaging with
$q=\pm L/\lambda $ and recover {\em only}
the modes satisfying (\ref{21}), then 
we can recover stably the modes in the set
$\{k: |k|<2L/\lambda\}$. In other words, we have two-fold
reduction in the smallest stably resolvable scale by means of the two extreme homogeneous illuminations.
}

The same stability analysis as  in Section \ref{sec2} implies 
that the stable modes in the Born scattering satisfy
the constraint 
\beq
|\alpha_k|={|k+q|\lambda\over L}\leq \sqrt{1+{\lambda^2\over z^2_0}} \label{23-3}
\eeq
\commentout{
either 
\[
|\alpha_k|={|k+q|\lambda\over L}<1,\quad\hbox{for the homogeneous modes}
\]
or (\ref{51-2}) must be satisfied (for the evanescent modes). 
}
which implies that  for each $q$  
\beq
\label{25-5}
-q-\sqrt{{L^2\over\lambda^2}+{L^2\over z^2_0}}\leq k\leq -q+
\sqrt{{L^2\over\lambda^2}+{L^2\over z^2_0}}. 
\eeq
Let us maximize the range of stable modes
 (\ref{25-5})  under the
constraint $|e^{i\om (z_1-z_0)\beta}|>e^{-2\pi}$ by \commentout{
Consider now  the case $z_0\gg \lambda$ (far-field measurement)  so that (\ref{resolve2})
 becomes 
 \beq
\label{resolve22}
 \lt( {|q|\over L}+ {1\over \lambda}
 \rt)^{-1}
\eeq
provided that $|e^{i\om (z_1-z_0)\beta}|>e^{-2\pi}$.
}
considering  two different illumination sources: homogeneous and
evanescent wave sources.  
If the target is illuminated by incident  homogeneous waves with
 $|q|< L/\lambda$, then the stable modes are 
 \beq
\label{25-5'}
|k| \leq {L\over \lambda} +
\sqrt{{L^2\over\lambda^2}+{L^2\over z^2_0}}. 
\eeq
The same resolution analysis as before
leads to
\beq
\lt|{\hat\sigma^\ep_k\over \hat \sigma_k}-1\rt|
&\leq&  {2\om |\beta_k|\over \sqrt{n}} \lt| e^{-i\om z_0\beta_k}\rt| \lt|e^{-i\om(z_1-z_0)\beta}\rt| C_1{\ep\over |\hat\sigma_k|}. \label{256}
\eeq
The appropriate definition of the signal-to-noise ratio in this case is
\[
\hbox{SNR}_k={\sqrt{n}  |\hat\sigma_k|\lt| e^{i\om z_0\beta_k}\rt|\lt|e^{i\om(z_1-z_0)\beta}\rt|/ \ep}
\]
after taking into account the intensity of illumination. 
Setting the right hand side of (\ref{256}) to be at most $1/K$ then leads to  (\ref{24'}) with $\alpha_k$ given by (\ref{23-3}).
  which in turn yields the resolution limit
\beq
\label{resolve22}
\lt({1\over \lambda}+ \sqrt{{1\over \lambda^2}+\lt({\hbox{SNR}\over 4\pi C_1 K}\rt)^2}\rt)^{-1}. 
\eeq

On the other hand, if the incident wave is  evanescent and subject to
the constraint $|e^{i\om (z_1-z_0)\beta}|>e^{-2\pi}$, 
which implies
\beq
\label{222}
{|q|}< \sqrt{{L^2\over \lambda^2} +{L^2\over (z_1-z_0)^2}}, 
\eeq
then the  stable modes  according to (\ref{25-5})
satisfy
 \beq
|k|<\sqrt{{L^2\over \lambda^2} +{L^2\over (z_1-z_0)^2}}  +\sqrt{{L^2\over \lambda^2}+{L^2\over z^2_0}}.\label{26}
\eeq

To maximize the range of stable modes limited  by (\ref{26})
we choose $q_*>0$,  which is  sufficiently close to the right hand
side of (\ref{222}) and satisfies  (\ref{21-4}),  and
illuminate the target by   a few evanescent waves
with $ q\in [-q_*, q_*]$. The union of their
respective stable modes (\ref{25-5}) is
the totality of stable modes. 
In optics, the evanescent illumination  can be produced physically by,  for example, the
total internal reflection \cite{NH}. 

Using the same evanescent illumination procedure we can achieve the
resolution limit
\beq
\label{resolve33}
\lt(\sqrt{{1\over \lambda^2} +{1\over (z_1-z_0)^2}}+ \sqrt{{1\over \lambda^2}+\lt({\hbox{SNR}\over 4\pi C_1 K}\rt)^2}\rt)^{-1}
\eeq
which indicates two ways of improving resolution without
increasing the probe frequency: reducing
the distance between  the plane-wave source and the target,
and increasing the signal-to-noise ratio.

\commentout{
which can be cast into the form (\ref{sense}) as
\beq
-{i\over 2\om\sqrt{m}}\sum_{k'} {\hat\sigma'_{k'-q}\over \beta_{k'-q}}  e^{i\om z_0\beta_{k'} } e^{i\om k' \lambda \xi/L}
\eeq 
with $k'=q-(m-1)/2,..., q+(m-1)/2.$ 
}

\section{Numerical results}\label{sec5}
\begin{figure}
\includegraphics[width=0.4\textwidth]{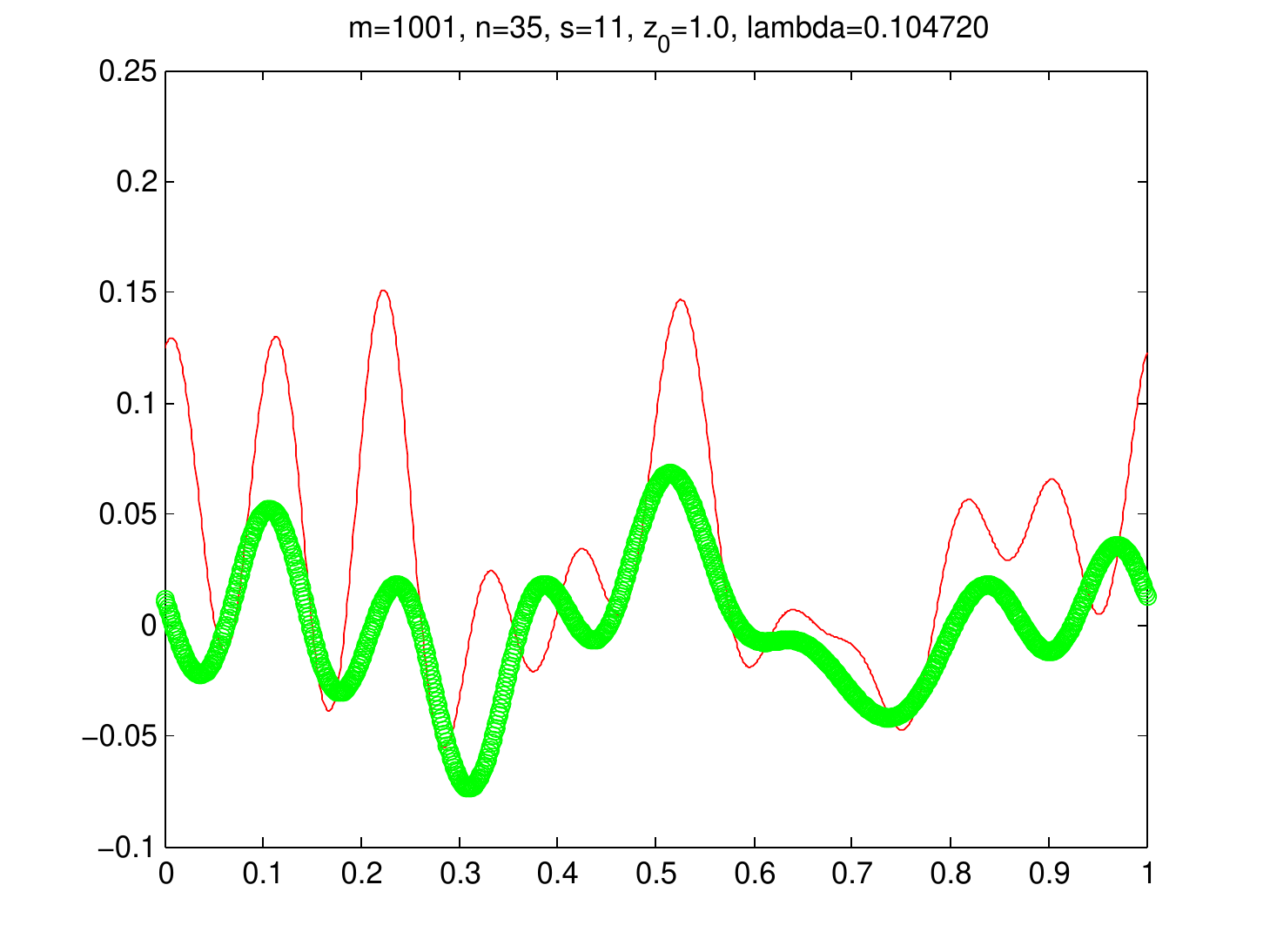}
\includegraphics[width=0.4\textwidth]{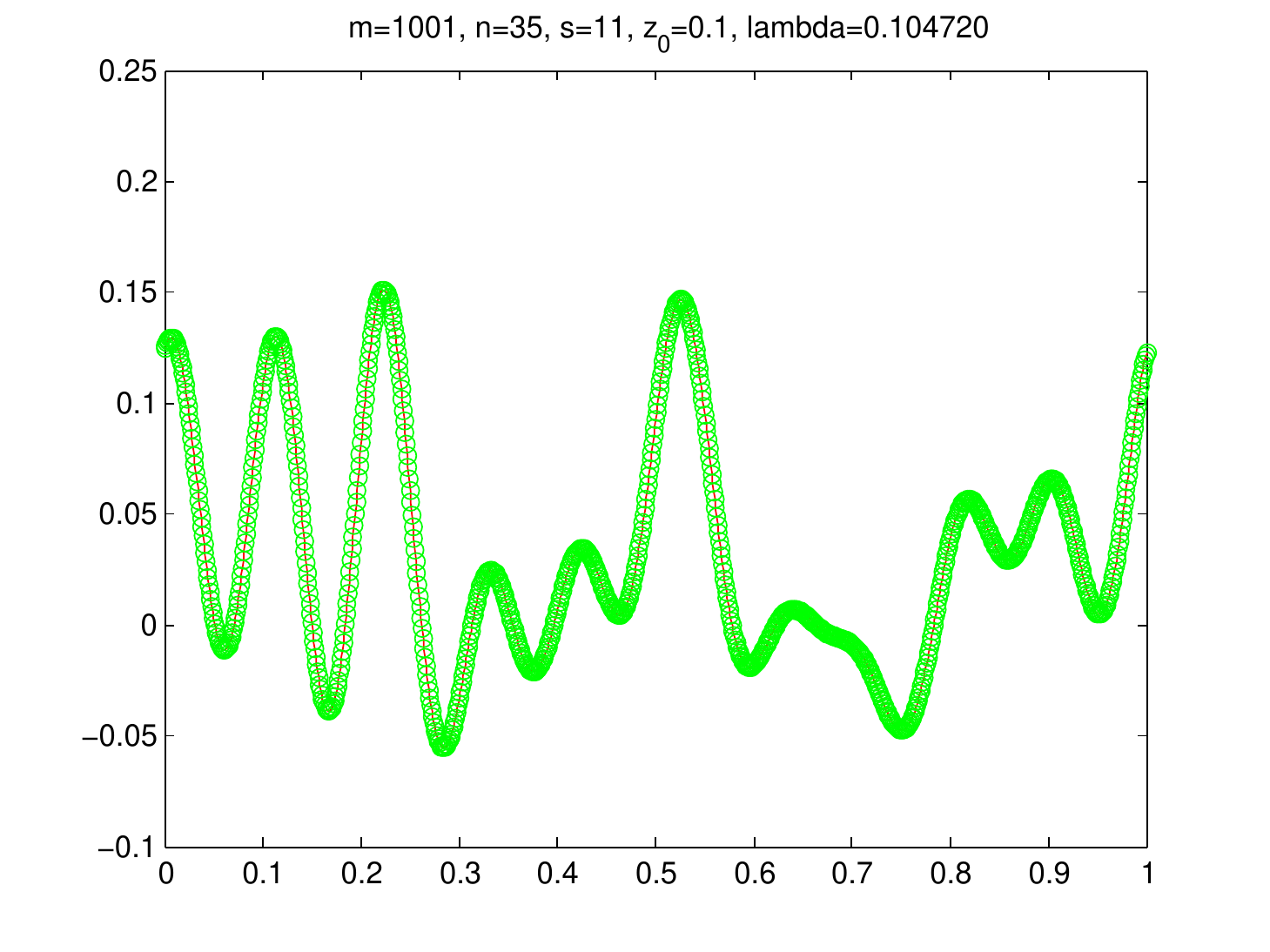}
\caption{In source inversion with far-field measurement (left, $z_0=1, n=35$),  the two  subwavelength modes $k=\pm 11$ cause significant  errors. 
With near-field measurement (right, $z_0=0.1, n=35$) the reconstruction is nearly perfect for target with 20 subwavelength
modes. The thin-red  curve
is the original profile and the thick-green curve
is the reconstructed profile. The two curves coincide in the plot
on the right.  }
\label{fig1}
\end{figure}

\begin{figure}[h]
\includegraphics[width=0.4\textwidth]{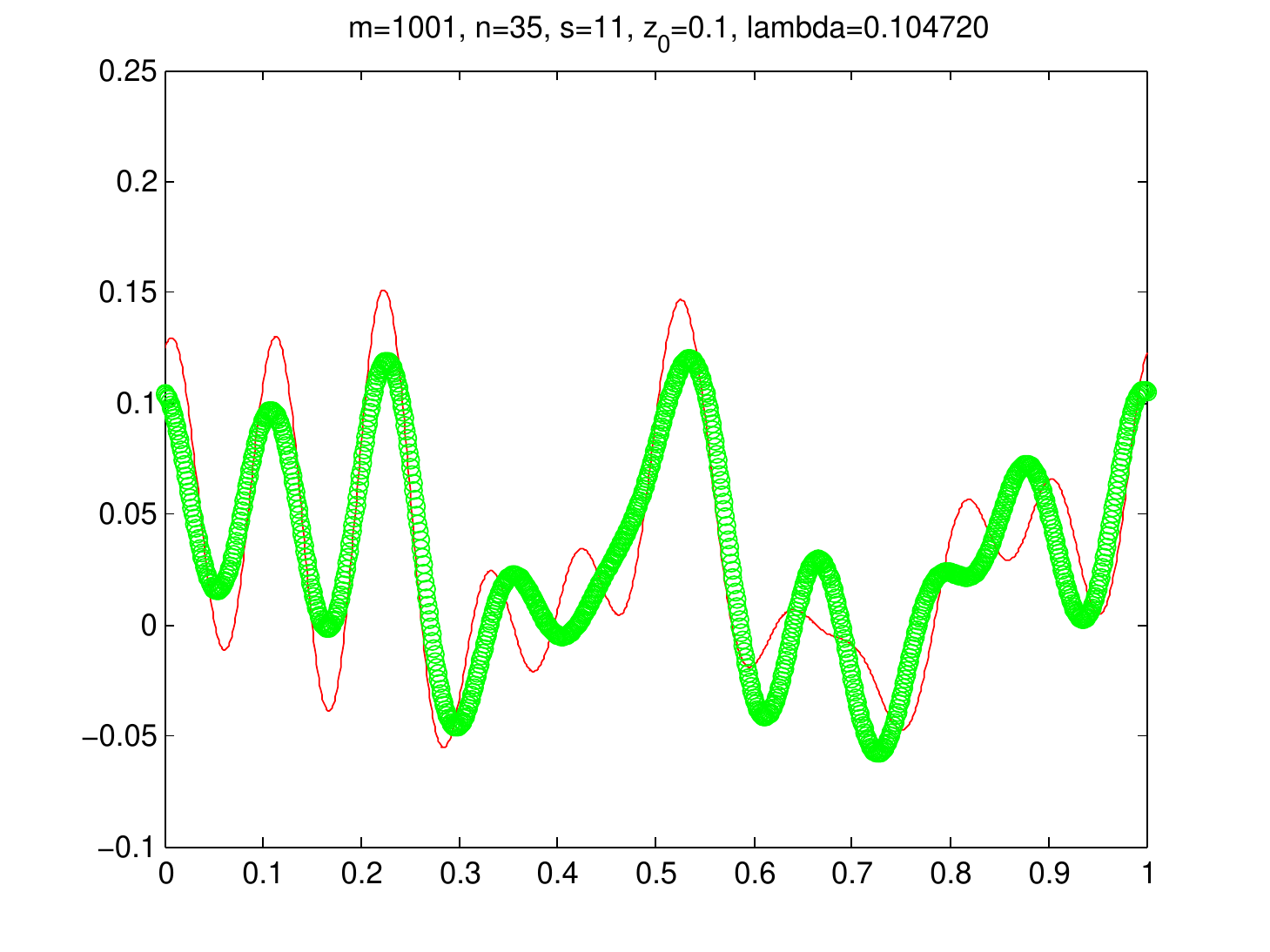}
\includegraphics[width=0.4\textwidth]{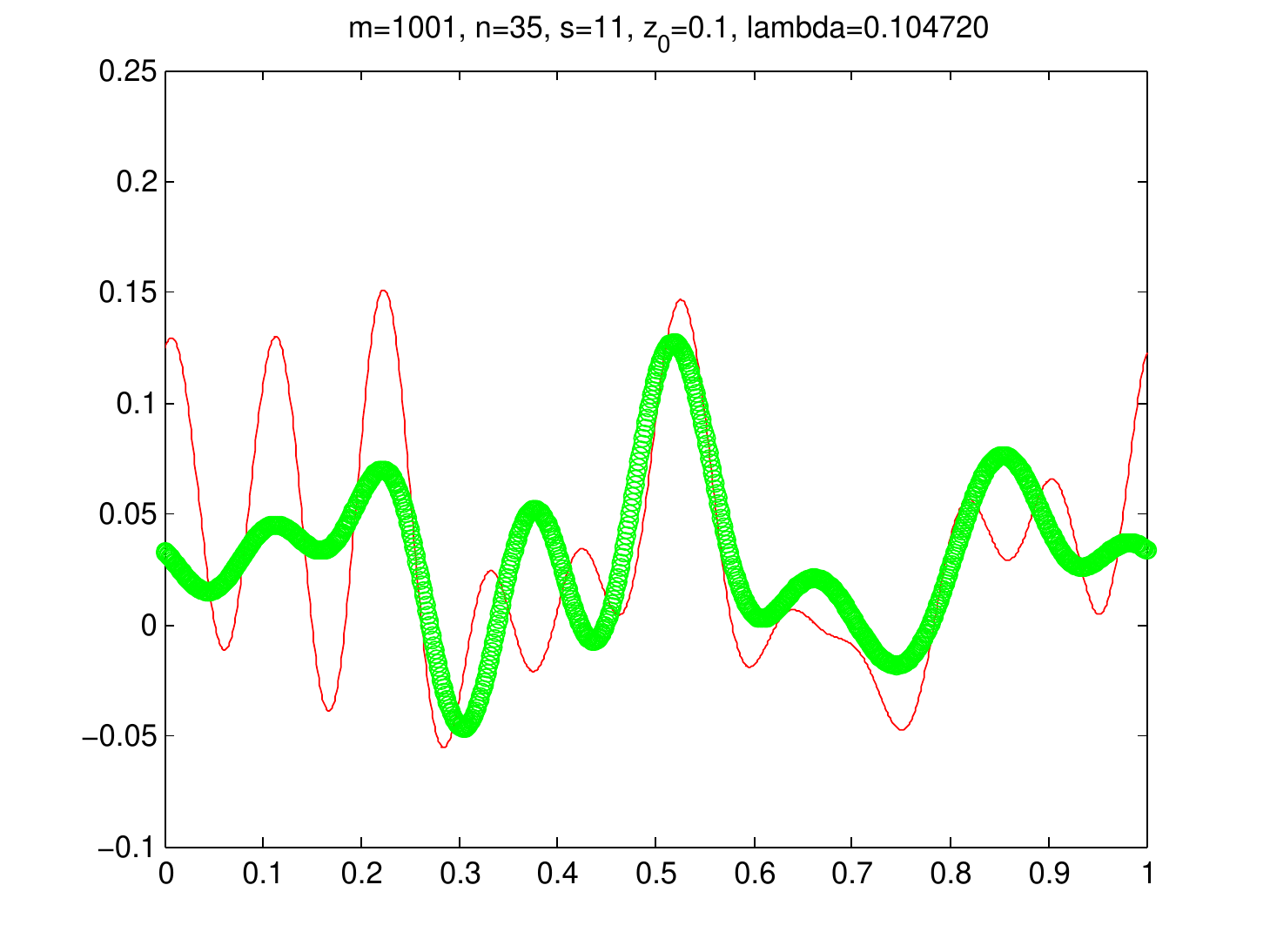}
\caption{In source inversion with near-field measurement with
added random noise (left $1\%$, right $5\%$).  The  thin-red curve
is the original profile and the thick-green   curve
is the reconstructed profile.  }
\label{fig1'}
\end{figure}

\begin{figure}
\includegraphics[width=0.4\textwidth]{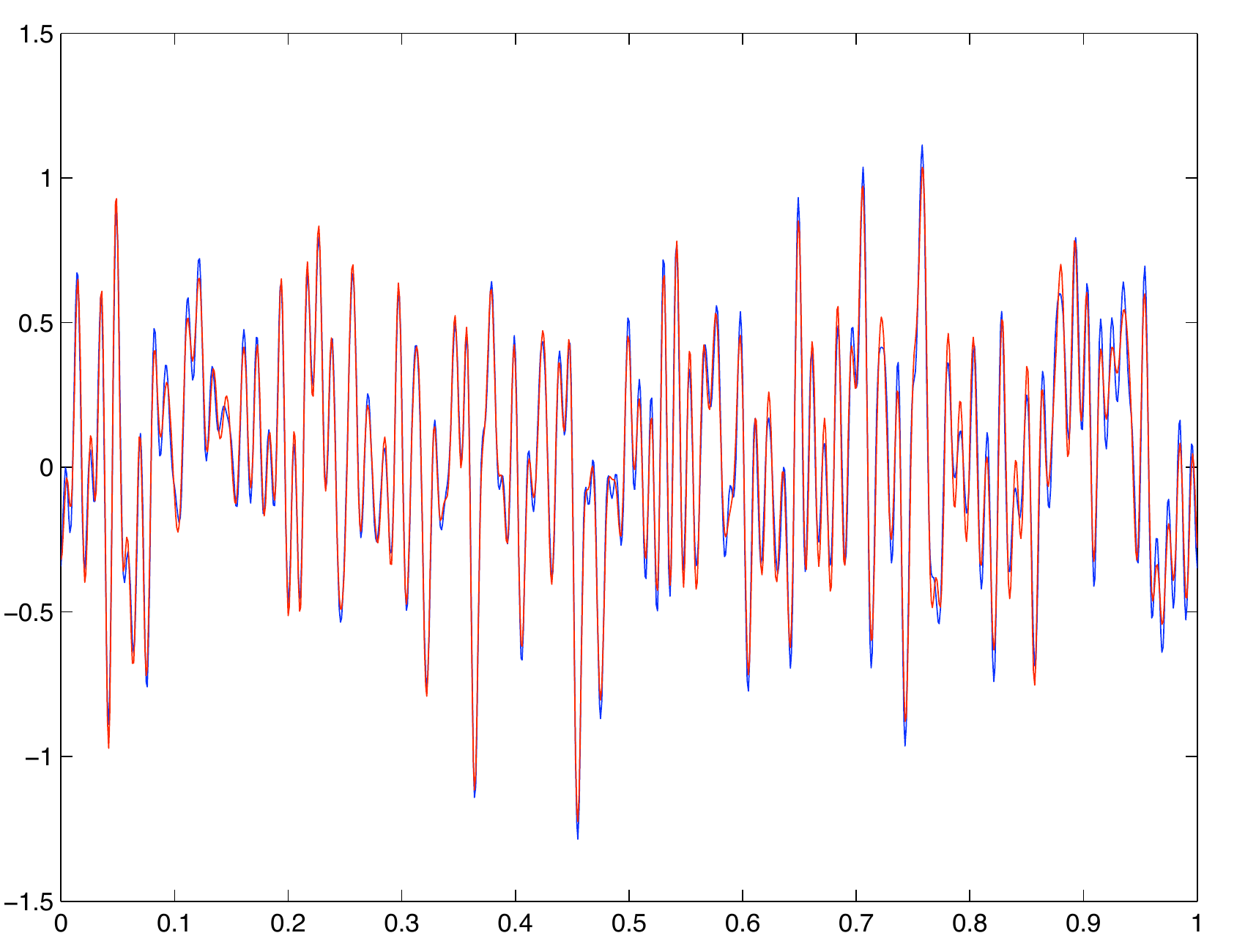}
\includegraphics[width=0.4\textwidth]{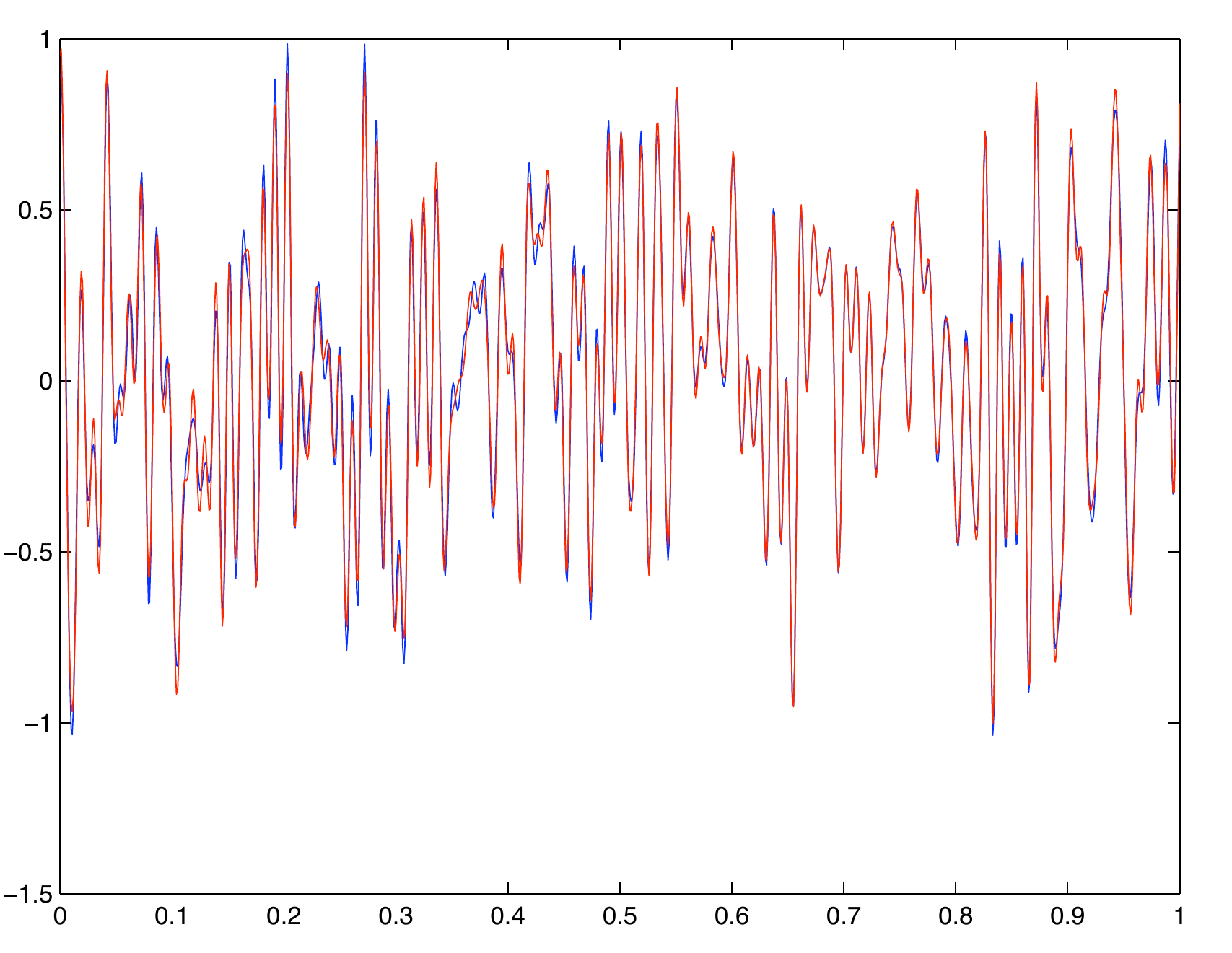}

\caption{Two independent runs (left with $n=70$ right with $n=82$)  with
far-field measurement ($z_0=1$) and  near-field illumination
($z_1-z_0=0.01$). The red curve is the original profile
 and the blue curve is the reconstructed profile. Except for
 a few spots, they coincide  with each other. }
\label{fig2}
\end{figure}
Numerical methods for convex optimization (such as
the Matlab program {\tt cvx} designed by M. Grant and S. Boyd) are not exact; they compute their results 
to within a predefined numerical precision or tolerance which is
acceptable for most applications. However, for 
our calculations involving subwavelength modes, 
the numerical error tends to be amplified enormously 
and hence severely spoils the reconstruction.

On the other hand, because of the simplicity of the
greedy algorithms, the numerical error can be much
reduced and the quality of recovery much improved. 
Of course, the sparsity constraints of the greedy algorithms
tend to be  more severe than  that of the convex relaxation.
Since the sparsity constraint is not the main focus 
of our study, we use the Subspace Pursuit in the numerical
calculations.

In our numerical simulations, $L=1$, $\lambda=\pi/30$, 
$m=1001$ so the subwavelength mode cutoff is at about $|k|=10$. The large $m$ is selected to demonstrate
the compressed sensing step. 

The numerical results for
source inversion are shown in Figure \ref{fig1}. 
First we image a periodic source with 11 modes,
including one  subwavelength mode $k= 11$
using far-field measurement ($z_0=1, n=35$). 
The main source of noise in 
this simulation is the roundoff error. 
As evident in the left plot of Figure \ref{fig1}
 the recovery is not accurate. Indeed,
 the subwavelength mode is entirely  missing
 in the reconstruction. 
Reducing the distance between the target and
the sensors ($z_0=0.1$) enables accurate
reconstruction  (right plot in Figure \ref{fig1}). 
When additional random noise is added, the
quality of recovery with near-field
measurement deteriorates in proportion to
the amount of noise (Figure \ref{fig1'}).
With about $5\%$ random noise, the
imaging result with near-field measurement (Figure \ref{fig1'}, right) 
is comparable  to that of  far-field measurement without
added noise (Figure \ref{fig1}, 
left).

 Figure \ref{fig2} shows the results
 for inverse Born scattering by the far-field
 measurement ($z_0=1$) and  the near-field
 illumination ($z_1-z_0=0.01$) without
 additional noise. 
 In this case it suffices to run the procedure
 for two incident modes with $q=\pm 90$. 
  
  \section{Extension and generalizations}
\label{sec4}
 \subsection{Non-periodic targets}\label{sec4.1}
 Let us consider a non-periodic target represented by
 a square-integrable function $\sigma$.
 
 To represent  $\sigma$
 we 
  use  the Littlewood-Paley basis
 \beq
 \label{lp}
 \hat \psi(\xi)=\lt\{
 \begin{matrix}
 (2\pi)^{-1/2}. & \pi\leq |\xi|\leq 2\pi\\
 0,&\hbox{otherwise}
 \end{matrix}
 \rt.
 \eeq
 or
 \beq
 \label{lp2}
 \psi(x)=(\pi x)^{-1} (\sin{(2\pi x)}-\sin{(\pi x)}). 
 \eeq
 Then the following set of functions 
 \beq
 \psi_{p,q}(x)=2^{-p/2}\psi(2^{-p}x-q),\quad p,q\in \IZ
 \eeq
 forms an orthonormal wavelet basis in $L^2(\IR)$ \cite{Dau}. 
  Expanding  the target profile $\sigma(x)$ in the Littlewood-Paley basis $\{\psi_{p,q}\}$  we have
 \beq
 \label{30}
 \sigma(x)=\sum_{p,q\in \IZ}\sigma_{p,q} \psi_{p,q}(x).
 \eeq

 \commentout{
The extended  target
 is represented by the strength function $\sigma(x)\in L^2(\IR)$
 \beq
 \label{0.1}
 \sigma(x)=\sum_{p,q\in \IZ}\tilde\sigma_{p,q}\psi_{p,q}(x).
 \eeq
 We assume that 
 \[
 \tilde\sigma_{p,q}=0,\quad |j|>M, \quad |k|>N
 \]
 for some known (usually large) integers $M$ and $N$.
 }
 \commentout{
 we can write  eq. (\ref{15}) as
 \beq
u^{\rm s}(z,x)&=&{i\over 4\pi}\sum_{p,q\in \IZ}\sigma_{p,q}
\int {d\alpha\over \beta} \int e^{i\om (\alpha x+\beta |z|)}
e^{-i\om\alpha x'} \psi_{p,q}(x')dx'
\eeq
}
\commentout{
Under the Born approximation equation becomes
\beq
u^{\rm s}(z,x)&=&{i\over 4\pi}\sum_{p,q\in \IZ}\tilde\sigma_{p,q}
\int {d\alpha\over \beta} \int e^{i\om (\alpha x+\beta |z|)}
e^{-i\om (\alpha-\alpha_0) x'} \psi_{p,q}(x')dx'\\
&=&{i\over \sqrt{8\pi}}\int {d\alpha\over \beta}\sum_{p,q\in \IZ}2^{p/2}\tilde\sigma_{p,q}
  e^{i\om (\alpha x+\beta |z|)}
e^{i\om (\alpha_0-\alpha) 2^{p} q} \hat \psi(\om (\alpha-\alpha_0) 2^{p}). \nn
\eeq
}

The scattered field has
the far-field asymptotic
\beq
u^{\rm s}(\br)={e^{i\om |\br|}\over |\br|^{1/2}}\lt(A
(\hat\br,\bd,\om)+\cO(|\br|^{-1})\rt),\quad \hat\br=\br/|\br|
\eeq
where  the scattering amplitude $A$ (at frequency $\om$ and
 the sampling direction $\hat\br$ with the incident field
 $u^{\rm i}=e^{i\bd\cdot\br}$)  can be calculated according to 
\beq
\label{sa}
A(\hat\br,\bd,\om)&=&{\om^2\over 4\pi}
\int d\br' \sigma(\br') e^{i\om \bd \cdot\br'} e^{-i\om \br'\cdot\hat\br},\quad \bd=(\alpha,\beta),\quad \alpha^2+\beta^2=1
\eeq
  \cite{CK}. 

In inverse scattering theory,  
 the  scattering amplitudes are measured in
 the directions $\hat\br_k=(\alpha_k,\beta_k)
 =(\cos\theta_k, \sin\theta_k),  k=1,...n$  and denoted by $Y=(Y_k)\in \IC^n$. 
By (\ref{sa}) and (\ref{30}) we have
\beq
\label{7}
Y_k={\om^2\over 2\sqrt{2\pi}} \sum_{p,q\in \IZ} 2^{p/2} \sigma_{p,q} e^{i\om (\alpha-\alpha_k)
2^{p} q}\hat\psi(\om(\alpha-\alpha_k) 2^{p}),\quad k=1,...,n.
\eeq

Let \beqn
l&=&\sum_{j=-p_*}^{p-1}(2m_j+1)+q,\quad |q|\leq m_p, \quad |p|\leq p_*, \\
k&=&\sum_{j=-p_*}^{p'-1}(2n_j+1)+q',\quad |q'|\leq n_p,
\quad |p'|\leq p_*
\eeqn
for some $m_p, n_p, p_* \in \IN$ be the column and row indices of the sensing matrix, respectively.  Set  $m=\sum_{|p|\leq p_*} (2m_p+1)$ and $n=\sum_{|p|\leq p_*}(2n_p+1)$.

Suppose that
\beq
\label{freq}
\om 2^{-p_*}\geq 2\pi
 \eeq
 and consider, for simplicity, the normal incident field
 $\alpha=0$. 
 Let $\zeta_{p',q'}$ be independent, uniform random variables on 
$[-1,1]$ and let
\beq
\alpha_k={\pi \over \om 2^{p'}}
\cdot\lt\{\begin{matrix}
1+\zeta_{p',q'},& \zeta_{p',q'}\in [0,1]\\
-1+\zeta_{p',q'},&\zeta_{p',q'}\in [-1,0]
\end{matrix}\rt.
\eeq
which lies in $[-1,1]$ by the assumption (\ref{freq}).
Let
 the sensing matrix elements be
\beq
\label{sense22}
\Phi_{k,l}= {1\over \sqrt{2n_p+1}}\hat\psi(\om\alpha_k 2^{p}) e^{-i\om\alpha_k
2^{p} q}. 
\eeq

We claim  that $\Phi_{k,l}=0$ for $p\neq p'$. This is evident
from the following calculation
\beq
\om\alpha_k 2^p={\pi 2^{p-p'}}
\cdot\lt\{\begin{matrix}
1+\zeta_{p',q'},& \zeta_{p',q'}\in [0,1]\\
-1+\zeta_{p',q'},&\zeta_{p',q'}\in [-1,0]
\end{matrix}\rt.
\eeq
whose right hand side is outside 
the support of $\hat \psi$ for $p\neq p'$. 

For every $p=p'$,  
\beq
\label{sense2}
\Phi_{k,l}= {1\over \sqrt{2n_p+1}} e^{-i \pi \zeta_{p',q'} q},\quad
|q'|\leq n_p,\quad |q|\leq m_p
\eeq
which constitute  the random partial Fourier matrix. 
In other words, under the assumption (\ref{freq}) 
the sensing matrix $\bPhi=[\Phi_{k,l}]\in \IC^{n\times m}$ is block-diagonal
with each block  (indexed by $p$) in the form (\ref{sense2}). 
\commentout{To enforce 
the constraint $\alpha_k\in [-1,1]$, we require
\[
\om 2^p\geq 2\pi
 \]
for all $p$ for which $\tau_{p,q}\neq 0$
and consider only 
\[
p' \geq \min{\{p: \tau_{p,q}\neq 0\}}.  
\]
}

\commentout{
 Moreover, in this case 
$\Phi_{k,l}\neq 0$ requires
 that
\[
\alpha_k={\xi\over \om 2^{p}} \in [-1,1],\quad |\xi|\in [\pi, 2\pi],
\]
in view of (\ref{lp}), which means that
\[
{\pi \over \om 2^{p}}< 1
\]
or equivalently
\[
2^{p}>\lambda/2.
\]
That is, only the information about the above-half-wavelength
scales is detected. 
}
Let  $X=(X_l)$ be the target vector  with
\[
X_l={\om^2\over 2\sqrt{2\pi}}\sqrt{(2n_{p}+1) 2^{p}} \sigma_{p,q},\quad l=\sum_{j=-p_*}^{p-1}(2m_j+1)+q. 
\]
We can then express the measurement vector $Y=(Y_k)\in \IC^n$  as in  (\ref{u1}).

Each block represented by the submatrix   (\ref{sense2}) can be treated as in
the periodic case. For each $p, |p|\leq p_*$, let $s_p$ be the sparsity of the
target vector for the scale $2^p$ and
suppose that the frequency $\om_p$ used to
probe the scale $2^p$ satisfies
\beq
\label{freq3}
\om_p2^p\geq 2\pi.
\eeq
Under these assumptions,   Theorem \ref{thm2} yields
the sufficient condition 
\beq
{2n_p+1\over \ln{(2n_p+1)}}\geq C \delta^{-2}s_p\ln^2{s_p} \ln{m} \ln{1\over \eta},\quad
\eta\in (0,1)
\eeq
for the RIP 
\[
\delta_{s_p}\leq \delta
\]
to hold with probability at least $1-\eta$. 

The assumption (\ref{freq3}) means that
the wavelength is at least as small
as the scale being  probed. 
Therefore,  this imaging method does not
possess the subwavelength resolving power.

Figure \ref{fig:LP} shows the result of 
reconstruction with the Littlewood-Paley basis
and the following parameters: $p_*=2, m_p=100,\forall p$;
for $ p=-2, -1, 0, 1,2$, 
$s_p=12, 24, 13, 24, 23, n_p=36, 64, 36, 64, 64,$ 
$\ep= 0.5134,     1.4849,     0.6520,     1.0274,     1.3681
$ equivalent to  the percentages  of noise $= 0.0500,     0.0649,     0.0494,     0.0640,     0.0698 $. 
The resulting reconstruction errors divided by the noises 
are $
    0.1941,     0.1240,     0.2498,     0.2361,     0.1502$,
    demonstrating the stability of the recovery.

\begin{figure}
\includegraphics[width=0.8\textwidth]{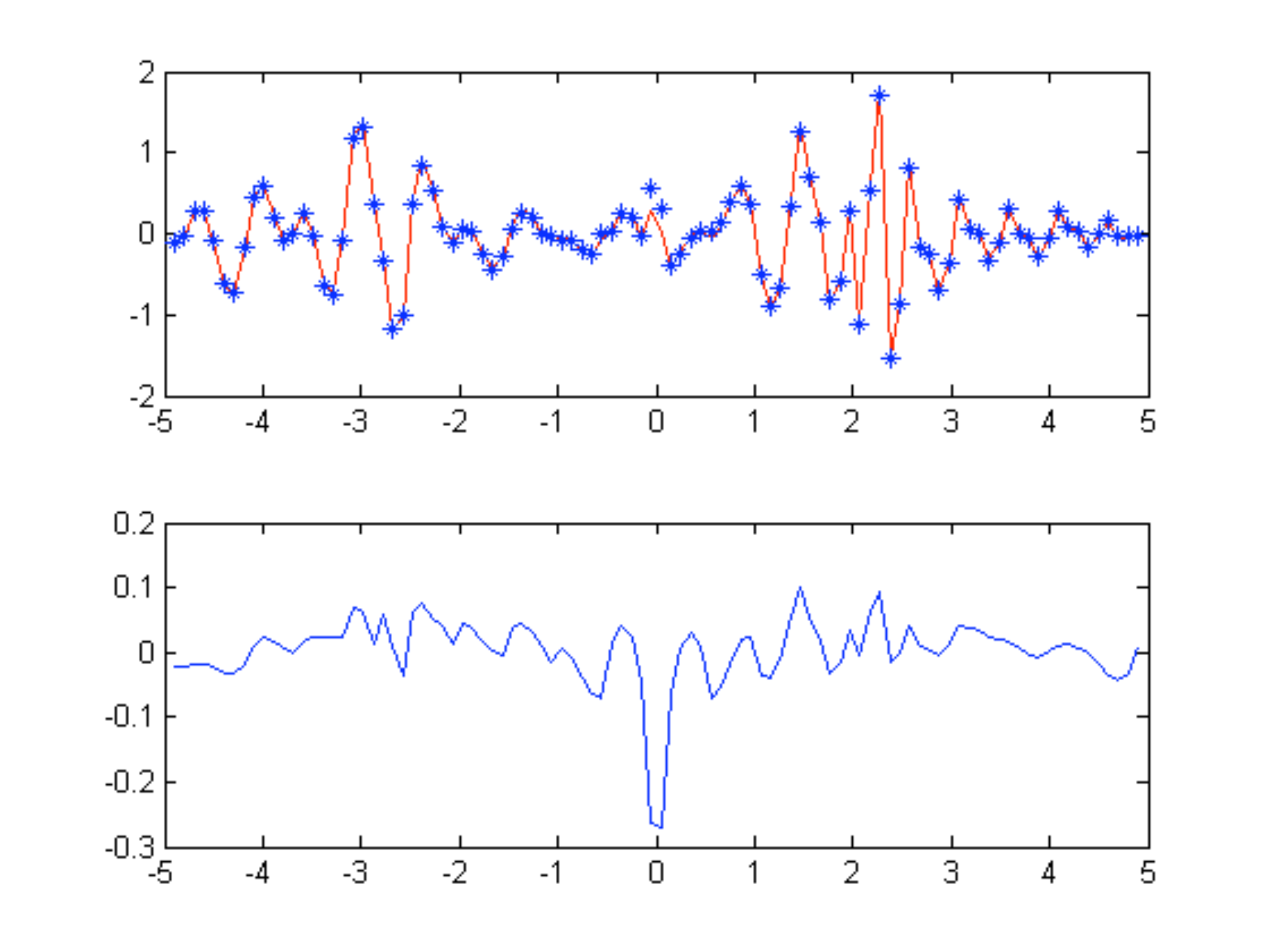}
\caption{ Imaging of  a non-periodic target in the presence of noise:  The red-solid curve (top)  is  the exact profile and the blue $*$ shows 
 the reconstructed  profile; the bottom plot shows the errors in reconstruction in the presence of noise. }
\label{fig:LP}
\end{figure}

\commentout{
\begin{figure}
\includegraphics[width=0.8\textwidth]{LP-figures/2.jpg}
\caption{Reconstruction  with the Littlewood-Paley basis ($p_*=2, m_p=100,\forall p$): (top) the exact profile, (middle) the reconstructed  profile, (bottom) the above two plotted together with the solid curve being the exact profile. }
\end{figure}
}

\commentout{
Consider the family of vectors $\{\Psi_j\}$ with
the components
\[
\Psi_{j,l}= {1\over \sqrt{N}}\hat\psi(\om\alpha_k 2^{p}) e^{-i\om\alpha_k
2^{p} q},\quad \alpha_k={\pi \over \om 2^{p}}(1+{q\over N}),
\]
where
\beqn
l&=&(M-m)N+q,\quad |p|\leq M, \quad n=1,...,N,\\
j&=&(M-p)N+q,\quad |p|\leq M,\quad q=1,...,N.
\eeqn
\begin{proposition}
The set of vectors  $\{\Psi_{j}\}$ forms  an orthonormal basis in $\IC^{N(2M+1)}$. 
\end{proposition}
\begin{proof}
First we check that this set is an orthogonal set.
For $j=(M-p)N+q, j'=(M-p)N+q$ we have
\beq
\sum_{p,q}\Psi_{j,l}\Psi^*_{j', l} &=&
{1\over {N}}\sum_{n=1}^N \hat\psi(\om\alpha_k 2^{p}) e^{-i\om\alpha_k
2^{p} q}\hat\psi^*(\om\alpha_{j'} 2^{m}) e^{i\om\alpha_{j'}
2^{p} q}. \label{8}
\eeq
Since 
$\hat\psi(\om\alpha_k 2^{p}) $ is zero unless
$p=p'$, (\ref{8}) vanishes unless $p=p'$. 
In this case, (\ref{8}) reduces to 
\[
{1\over {N}}\sum_{n=1}^N e^{-i\pi(1+q/N) n} e^{i\pi(1+q/N) n}
\]
which vanishes for $q\neq q$.
When $q=q, p=p'$ it is easily verified that (\ref{8}) equals unity.
 \end{proof}
For each $i=1,...,(2M+1)$,  let $I_{k_i}, k_i=1,...,N$ 
be independent Bernoulli random variables taking the value
1 with probability $\bar K/N, \bar K\ll N.$  Let 
$
K_i $ be the cardinality of the random set
\beq
\Omega_i= \{ k_i\in \{1,...,N\}| I_{k_i}=1\},\quad i=1,...,(2M+1)
\eeq
and let $k_i^*=1,...,K_i$ enumerate  the members of $\Omega_i$. We have  $\IE|\Omega_i|=\bar K, \forall i$ and indeed by 
Bernstein's inequality $ K_i, \forall i$ are  close to $\bar K$
with high probability if $N$ is sufficiently large. 
For every $m$ and $n$ define 
  \beq
 \label{10}
 \Phi_{k, l}=\sqrt{N\over K_m} \Psi_{k^*_m,l}, \quad  k_m^*=1,...,K_m, \quad l=(M-m)N+q
 \eeq
for $ k=k_m^*+\sum_{j=1}^{i-1} K_j$ 
and $\Phi_{k,l}=0$ otherwise.

Let  $\bPhi=[\Phi_{k, l}]$ be the sensing matrix and
let  $X=(X_l)$ with
\[
X_l=\sqrt{2\pi K_m 2^{p}} \sigma_{p,q},\quad l=(M-m)N+q 
\]
be the target vector. 
We can then express the datum vector $Y=(Y_k)$  as 
\beq
\label{u1}
Y=\bPhi X
\eeq

}

\commentout{
\subsection{Non-periodic band-limited  targets} Consider
the source inversion problem and  
suppose that  the target  function $\sigma(x)$ is band-limited 
and  admits  the Fourier integral
\[
\sigma(x)=\int \hat \sigma(k) e^{i2\pi kx} dk
\]
where ${\rm supp} (\hat\sigma)=[a,b]$. 
Then
\beq\label{31}
\int G(z_0, x-\xi) \sigma(x) dx =-{i\over 2\omega}
\int^b_a dk {\hat\sigma(k)\over \beta(k\lambda)} e^{i\omega k \lambda \xi}
e^{i\omega \beta(k\lambda)  z_0}. 
\eeq
 Consider 
 the Trapezoidal rule  for the integral (\ref{31}) 
 with step size $h=(b-a)/(m-1)$ 
\beq
\label{22}
{-i\over \omega }\lt[ {\hat\sigma_1\over 2\beta_1}e^{i\om\beta_1 z_0}e^{i\om\alpha_1\xi}+{\hat\sigma_m\over 2\beta_m} 
 e^{i\om\beta_m z_0} e^{i\om\alpha_m\xi}
+\sum_{k=2}^{m-1} {\hat\sigma_k\over \beta_k} 
e^{i\omega \beta_k z_0}e^{i\omega \alpha_k \xi}\rt]+E
 \eeq
 with
 \beqn
 \hat \sigma_k &=&h\hat\sigma(a+kh)\\
 \alpha_k&=&(a+ kh) \lambda\\
 \beta_k&=& \beta(\alpha_k)
 \eeqn
 where the error $E$ is given by
 \[
 E=-{(b-a)^3\over 12 (m-1)^2} f''(\zeta),\quad
 f(k)=-{i\over 2\omega}
{\hat\sigma(k)\over \beta(k\lambda)} e^{i\omega k \lambda \xi}
e^{i\omega \beta(k\lambda)  z_0}
 \]
 for some $\zeta\in [a,b]$. Hence, $\epsilon=\|E\|_2=\cO(h^2)$. 
 
 Let 
 \beq
 X_1&=&-{i\sqrt{p}\over \omega } {\hat\sigma_1\over 4\beta_1}e^{i\om\beta_1 z_0}\hat\sigma_1\\
 X_k&=&-{i\sqrt{p}\over \omega }{\hat\sigma_k\over 2\beta_k} 
e^{i\omega \beta_k z_0}\hat\sigma_k,\quad k=2,...,m-1\\
 X_m&=&-{i\sqrt{p}\over \omega }{\hat\sigma_m\over 4\beta_m}e^{i\om\beta_m z_0}\hat\sigma_m. 
 \eeq
Then the corresponding solution $X^\ep$ of 
 (\ref{32}) differs from the true target vector $X$ by
 $\cO(\ep)=\cO(h^2)$. 
}
 
 \subsection{Three dimensions}\label{sec4.2}
 
 The present framework  can be easily  extended to three dimensions by using the plane-wave   
representation 
\beq
{e^{i\om|\br|}\over |\br|}={i\om\over 2\pi}
\int {d\alpha d\beta \over \gamma} 
 \exp{\lt[i\om(\alpha x+\beta y+
 \gamma|z|)\rt]}, \quad \br=(x,y,z)
\label{weyl}
 \eeq
 where
 \beqn
\gamma&=&\sqrt{1-\alpha^2-\beta^2},\quad \alpha^2+\beta^2\leq 1\\
\gamma&=&i\sqrt{\alpha^2+\beta^2-1},\quad \alpha^2+\beta^2>1
\eeqn
for the Green function 
\cite{BW}. 

 \section{Conclusion}\label{sec6}
In this note, we have analyzed the problem of imaging 
extended  targets  in the perspective of Compressed Sensing which provides assurance of stable  
reconstruction of a target composed of sparse Fourier  modes
with the similar number of measurements, modulo
a poly-logarithmic factor, 
 in the presence of noise. 
 
 We have shown that the number of
 stably recoverable modes grows as the negative
 $d$-th power of the distance between the
 target and the sensors/source. Hence
 the stability of reconstructing subwavelength modes 
 requires the distance to be less than the 
 wavelength. On the other hand, we have also shown that
 the resolution limit is inversely proportional to
the  SNR in the high SNR limit. As a consequence,
the subwavelength modes
can be recovered at sufficiently high SNR as well
as by placing the sensors at a subwavelength distance
from the target.  

However, it remains to be seen if
 these results about stability and resolution can
 be extended  to the case of square-integrable targets even though  our sampling scheme block-diagonalizes 
 the corresponding scattering matrix according to 
 each dyadic scale.  
     
Finally we note that the compressive imaging theory in the remote sensing regime
for discrete point targets has been recently developed
in  \cite{cis-simo, cis-siso, CS-par}.

\bigskip
{\bf Acknowledgement.} I thank my students Arcade Tseng
(Figures \ref{fig1}-\ref{fig2}) and Wenjing Liao (Figure \ref{fig:LP})  for 
 preparing  the figures. I am grateful to the anonymous referee
 for the helpful comments for improving the manuscript.

 \bigskip

\end{document}